\DeclareMathOperator*{\argmax}{arg\,max}
\newtheorem{theorem}{Theorem}
\newtheorem{assumption}{Assumption}
\newtheorem{example}{Example}
\title{\lowercase{e}Tutor: Online Learning for Personalized Education}
\author{\IEEEauthorblockN{Cem Tekin, Mihaela van der Schaar\\}
\IEEEauthorblockA{Electrical Engineering Department,
University of California, Los Angeles\\
Email: cmtkn@ucla.edu, mihaela@ee.ucla.edu}
}
\begin{document}

\maketitle

\begin{abstract}
Given recent advances in information technology and artificial
intelligence, web-based education systems have became complementary and, in some cases, viable alternatives to traditional
classroom teaching.
The popularity of these systems stems from their ability to make education available to a large demographics (see MOOCs). However, existing systems do not take advantage of the personalization which becomes possible when web-based education is offered: they continue to be one-size-fits-all. In this paper, we aim to provide a first systematic method for designing a personalized web-based education system. Personalizing education is challenging:  ($i$)  students need to be provided personalized teaching and training depending on their contexts (e.g. classes already taken, methods of learning preferred, etc.),
($ii$) for each specific context, the best teaching and training method (e.g type and order
of teaching materials to be shown) must be learned, ($iii$) teaching and training 
should be adapted online, based on the scores/feedback (e.g. tests, quizzes, final exam, likes/dislikes etc.) of the students. Our personalized online system, e-Tutor, is able to address these challenges by learning how to adapt the teaching methodology (in this case what sequence of
teaching material to present to a student) to maximize her performance
in the final exam, while minimizing the time spent by the students to learn the course (and possibly
dropouts). We illustrate the efficiency of
the proposed method on a real-world eTutor platform which
is used for remedial training for a Digital Signal Processing (DSP) course.
\end{abstract}

\begin{keywords} Online learning, personalized education, eLearning,
intelligent tutoring systems. \end{keywords}

\section{Introduction}\label{sec:intro} 
 
The last decade has witnessed an explosion in the number of web-based
education systems due to the increasing demand in higher-level education \cite{Survey}, limited number of teaching personnel, and advances
in information technology and artificial intelligence. Nowadays, most
universities have integrated \textit{Massive Open Online Course} (MOOC)
platforms into their education systems such as edX consortium, Coursera
or Udacity \cite{Stanford,Coursera,edX}, to give students the possibility to learn by interacting
with a software program instead of human teachers. Several advantages
of these systems over traditional classroom teaching are: (\emph{i})
they provide flexibility to the student in choosing what to learn
and when to learn, (\emph{ii}) they do not require the presence of
an interactive human teacher, (\emph{iii}) there are no limitations
in terms of the number of students who can take the course. However,
there are significant limitations of currently available online teaching
platforms. Since courses are taken online, there is no interaction
between the students and the teacher as in a classroom setting. This
makes it very difficult to meet the personalized needs of each student,
which may arise due to the differences between qualifications, learning
methods and cognitive skills of the students. It is observed that
if the personalization of teaching content is not carried out efficiently,
high drop-outs will occur \cite{Survey}. For instance, the students
that are very familiar with the topic may drop-out if the teaching
material is not challenging enough, while the students that are new to
the topic may get overstrained if the teaching material is hard. 

Due to these challenges, a new web-based education system that personalizes
education by learning online the needs of the students based on their
contexts, and adapting the teaching material based on the feedback
signals received from the student (answers to questions, quizzes,
etc.) is required. For this purpose we develop the \emph{eTutor} (illustrated in Fig. \ref{fig:basicfigure1}),
which is an online web-based education system, that learns how to teach a course, a concept or remedial materials to a student with a specific context in the most efficient
way. Basically, for the current student, eTutor learns from its past
interactions with students with similar contexts, the sequence of
teaching materials that are shown to these students, and the response
of these students to the teaching materials including the final exam
scores, how to teach the course in the most effective way. This is
done by defining a \emph{teaching effectiveness metric}, referred to as the {\em regret}, that is a function of the
final exam score and time cost of teaching to the student, and then
designing a learning algorithm that learns to optimize this metric.
This tradeoff between learning (exploring) and optimizing (exploiting) is captured by the eTutor in the most efficient way,
i.e., the average exam score of the students converge to the average
exam score that could be achieved by the best teaching strategy. We illustrate the efficiency
of the proposed system in a real-world experiment carried out on students
in a DSP class. 

\begin{figure}
\begin{center}
\includegraphics[scale=0.23]{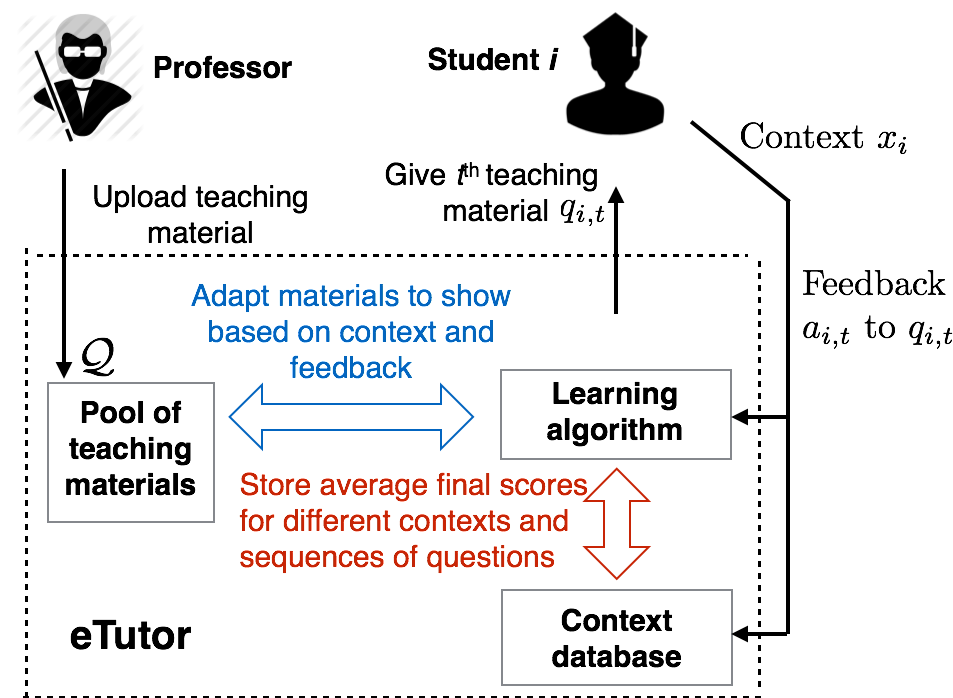}
\protect\caption{eTutor, student and professor interaction. \label{fig:basicfigure1}}
\end{center}
\end{figure}

\subsection{Related Work}  
Although web-based education systems have recently become popular,
there is no consensus or standards on how to design an optimal web-based
education system. A detailed comparison of our work with the related
work in web-based education is given in Table \ref{table:relwork}. Most of the recent
works focus on the subfield of MOOCs, which are online courses with
very large number of students \cite{Survey, Stanford, Coursera, edX}. Among these, several works examine
students' interaction with commercially available MOOC systems such
as Coursera \cite{Stanford,Coursera} and edX \cite{edX}.

Apart from these, two approaches exist in designing web-based education
systems: adaptive education systems and intelligent tutoring systems.
In an adaptive education system \cite{AES1,AES2}, the teaching materials
that are shown to each student are adapted based on the context of
the student, but not based on the feedback the student provides during
the course. This adaptation is based on numerous contexts including
the student's learning style, her knowledge, background, origin, grades,
previously taken courses etc. In contrast, in an intelligent tutoring
system adaptation is done based on the response of the student to
the given teaching material \cite{IT1,IT2,IT3,IT4,IT5,IT6,IT7}, without
taking into account contexts. Our work combines both ideas by adapting
the sequence of teaching materials that is presented to a student
based on both the context and the feedback of the student. However,
our techniques are very different from both lines of research. Our
goal is to learn the optimal way to teach a course {\em in a way that is most effective for each student}. To learn effectively, our method utilizes the past knowledge gained about the efficacy of the material from students with similar contexts who have taken the course before. This is different
from \cite{IT1,IT2,IT3,IT4,IT5,IT6,IT7}, which only take into account
the current student's response to the previously shown teaching material.


\begin{table}
\centering
{\fontsize{10}{10}\selectfont
\setlength{\tabcolsep}{.1em}
\begin{tabular}{|l|c|c|c|c|}
\hline
Method & Context-based  & Feedback-based &  Learns from & Regret \\
       & learning		  & learning    &   final exam & bound \\
\hline
\cite{AES1,AES2} & Yes & No & No & No \\
\hline
\cite{IT1,IT2,IT3,IT4,IT5,IT6,IT7} & No & Yes & No & No \\
\hline
Our work & Yes & Yes & Yes & Yes \\
\hline
\end{tabular}
}
\caption{Comparison with related work.}
\label{table:relwork}
\vspace{-0.3in}
\end{table} 
\section{Formalism, Algorithm and Analysis} \label{sec:Formalism_Algorithm_Analysis} 
In this section we mathematically formalize the online teaching/tutoring problem, define a benchmark tutor (i.e. the "ideal" tutor) and propose an online learning algorithm for the eTutor which converges in performance to the benchmark tutor that knows the optimal sequence of teaching materials to show for each student.

\subsection{Problem Definition}

Consider a set of students participating in an online education system
and a concept that should be learned by the students. The comprehension
of the concept will be tested via a final exam (test). We assume that
the students arrive sequentially over time and use index $i$ to denote the $i$th student. Additionally, we assume that when a student first
interacts with the online education system, she needs to answer a
set of questions, which will form the $\emph{context}$ of the student.
Context may include information about the student such as age, grades, whether she prefers visual or written instructions,
etc. Denote the finite set of all possible contexts by ${\cal X}$ and an
element of ${\cal X}$ by $x$. The concept will be taught by presenting
a set of teaching materials (written or visual) to the student and
asking a set of questions about these materials and providing their
answers. Let ${\cal Q}$ be the set of teaching materials (consists
of text/images to learn from and questions) that can be given to the
student. The number of elements of ${\cal Q}$ is denoted by $Q$. 

The materials that are shown to a student are chosen in an online
way based on the context of the student, previous materials that are
shown to the student, the student's response to shown questions (whether
the answer is correct or not) and all the previous knowledge obtained
from past students with contexts, responses and scores similar to
the current student. It is also important to learn in which order the
materials should be shown, since learning from one material may require
knowledge of a concept which can be learned by understanding another
material. 

For each student $i$, we consider a discrete time model $t=1,2,\ldots,T_{i}$,
where time $t$ denotes the sequence of events related to the $t$th
material that is shown to the student. $T_{i}$ denotes the number
of teaching materials shown to student $i$ before the final exam
is given (depends on student's feedback). Clearly, $T_{i}\leq Q$. The $t$th teaching
material shown to student $i$ is denoted by $q_{i,t}$. Let $\boldsymbol{q}_{i}:=(q_{i,1},\ldots,q_{i,T_{i}})$,
and $\boldsymbol{q}_{i}[t] := (q_{i,1},\ldots,q_{i,t})$.

We denote student $i$'s response to $q_{i,t}$ by $a_{i,t}\in {\cal A}$, where ${\cal A}$ is the set of possible feedbacks that the student can provide to a teaching material. We assume that ${\cal A}$ is finite. For instance, an example can be the case when ${\cal A} := \{ -1, 0, 1  \}$. 
If the student does not
provide any feedback on the teaching material, we have $a_{i,t}=0$; when the teaching
material is a (multiple-choice) question, $a_{i,t}=1$ denotes a correct
answer and $a_{i,t}=-1$ denotes a wrong answer. 
Let $\boldsymbol{a}_i := (a_{i,1},\ldots,a_{i,T_i})$ and 
$\boldsymbol{a}_i[t] := (a_{i,1},\ldots,a_{i,t})$, $t \leq T_i$. In addition, let $a_{i,0} :=0$, which indicates that no feedback is available prior to $1$st teaching material.

Let $\boldsymbol{S}$ denote the set of all sequences of teaching materials
that can be shown.\footnote{In practice, it is possible to give $\boldsymbol{S}$ as an input in addition to ${\cal Q}$. For instance, some sequences which are classified by the professor as unreasonable can be discarded, significantly reducing the size of $\boldsymbol{S}$.} 
We have 
\begin{align}
|  \boldsymbol{S}  | = \sum_{t=1}^Q {Q \choose t} t! = Q! \sum_{t=1}^Q \frac{1}{(Q-t)!} 
= Q! \left(  \frac{e \Gamma(Q+1,1)}{\Gamma(Q+1)} - 1 \right) \geq Q! ,     \notag
\end{align}
where $\Gamma(y)$ is the gamma function and $\Gamma(x,y)$ is the incomplete gamma function. 

For a sequence of materials $\boldsymbol{s}\in\boldsymbol{S}$,
let $\boldsymbol{{\cal A}}(\boldsymbol{s})$ be the set of sequences
of feedbacks a student can provide. The expected final exam score
for a student with context $x$, sequence of questions $\boldsymbol{s}\in\boldsymbol{S}$
and sequence of feedbacks $\boldsymbol{a}\in\boldsymbol{{\cal A}}(\boldsymbol{s})$
is denoted by $r_{x,\boldsymbol{s},\boldsymbol{a}}$. We assume that
the final exam score of a student with context $x$, the sequence
of teaching materials $\boldsymbol{s}$ and the sequence of feedbacks
$\boldsymbol{a}$ is randomly drawn from a $F_{x,\boldsymbol{s},\boldsymbol{a}}$
with expected value $r_{x,\boldsymbol{s},\boldsymbol{a}}$. Both $F_{x,\boldsymbol{s},\boldsymbol{a}}$
and $r_{x,\boldsymbol{s},\boldsymbol{a}}$ are unknown. 

\subsection{The Benchmark Tutor}

Due to the enormous number of possible sequences of teaching materials, it is
not possible to learn the best sequence of teaching materials by trying
all of them for different students. In this section we define a benchmark tutor, whose teaching strategy can be learned very fast. We call it the \emph{best-first} (BF) benchmark. Due to limited space its pseudocode is given in Fig. \ref{fig:BS}, however, we describe it in detail below. In order to explain this benchmark, we require a few more notations.

\begin{figure}[htb]
\fbox {\begin{minipage}{0.95\columnwidth}
\begin{algorithmic}[1]
\STATE{Receive student context $x$}
\STATE{Show $q^*_{x,1} = \argmax_{ q \in {\cal Q} } y_{x,q,0}$}
\STATE{Receive $a^*_1$.}
\WHILE{$1<t\leq Q$}
\IF{\flushleft{$r_{x,\boldsymbol{s}^*[t-1],\boldsymbol{a}^*[t-1]} 
\geq \max_{q \in {\cal Q}_{\boldsymbol{s}^*[t-1]} } y_{x,\boldsymbol{s}^*[t-1], \boldsymbol{a}^*[t-1] }-c$}}
\STATE{Give the final exam. //BREAK}
\ELSE
\STATE{$q^*_{x,t} = \argmax_{q \in {\cal Q}_{\boldsymbol{s}^*[t-1]}}  y_{x,\boldsymbol{s}^*[t-1], \boldsymbol{a}^*[t-1]}$}
\ENDIF
\STATE{$t=t+1$}
\ENDWHILE
\end{algorithmic}
\end{minipage}
}  
\vspace{-0.1in}
\caption{Pseudocode for BF.}
\label{fig:BS}
\vspace{-0.1in}
\end{figure}

Given a sequence $\boldsymbol{s}$ of teaching materials, let ${\cal Q}_{\boldsymbol{s}}$
be the set of remaining teaching materials that can be given to the
student. Let $\boldsymbol{S}[t]\subset\boldsymbol{S}$ be the set
of sequences that consists of $t$ teaching materials followed by
the final exam. In order to explicitly state the number of teaching
materials in a sequence of teaching materials, we will use the notation
$\boldsymbol{s}[t]$ to denote an element of $\boldsymbol{S}[t]$.
We will also use $\boldsymbol{a}_{\boldsymbol{s}[t]}[t']$ to denote
the student's feedback to the first $t'$ teaching materials in $\boldsymbol{s}[t]$.
Let 
\begin{align}
y_{x,\boldsymbol{s}[t],\boldsymbol{a}_{\boldsymbol{s}[t]}[t-1]}
:= \mathrm{E}_{a_{t}}[r_{x,\boldsymbol{s}[t],(\boldsymbol{a}_{\boldsymbol{s}[t]}[t-1],a_{t})}] ,      \notag
\end{align}
be the \emph{ex-ante} final exam score of a student with context $x$
which is given teaching materials $\boldsymbol{s}[t]$ and provided
feedback to all of them except the last teaching material. 

The BF benchmark incrementally selects the next teaching material
to show based on the student's feedback about the previous teaching
materials. The first teaching material it shows is $q^*_{x,1} = \argmax_{ q \in {\cal Q} } y_{x,q,0}$.
Let $\boldsymbol{q}^*_{x} = (q^*_{x,1}, q^*_{x,2}, \ldots, q^*_{x,T} )$ be the sequence of teaching materials shown by the BF benchmark to a student with context $x$, where $T$ is the random total number of materials shown to the student, which depends on the feedback of the student to the shown materials.
In general, the $t$th teaching material to show, i.e., $q^*_{x,t}$, depends on both $\boldsymbol{q}^*_x[t-1]$ and $a_{\boldsymbol{q}^*_x[t-1]}[t-1]$. We assume that the following property holds for the expected final exam score given the sequence of materials shown by the BF benchmark and the feedback obtained from these shown materials:       

\begin{assumption}\label{ass:Markov}
Consider a context $x \in {\cal X}$ and any two sequences $(\boldsymbol{q}^*, \boldsymbol{a})$ and $(\boldsymbol{q}^*, \boldsymbol{a}')$, where $\boldsymbol{q}^*$ is the set of teaching materials shown by the BF benchmark and $\boldsymbol{a}$ and $\boldsymbol{a}'$ are two feedback sequences that are associated with this set of teaching materials. Then, we have
\begin{align}
\argmax_{ q \in {\cal Q}_{\boldsymbol{q}^*[t]}  } y_{x, (\boldsymbol{q}^*[t], q) , \boldsymbol{a}[t]}    
=  \argmax_{ q \in {\cal Q}_{\boldsymbol{q}^*[t]}  } y_{x, (\boldsymbol{q}^*[t], q) , \boldsymbol{a}'[t]}  .
 \notag
\end{align} 
\end{assumption}

We have $q^*_{x,t}=\argmax_{q \in {\cal Q}_{\boldsymbol{q}^*_x[t-1]} } r_{x,\boldsymbol{q}^*_x[t-1], \boldsymbol{a}^*_{t-1}}$. 
For any $t$, if $r_{x,\boldsymbol{q}^*_x[t],\boldsymbol{a}^*[t]}\geq y_{x,(\boldsymbol{q}^*_x[t],q),\boldsymbol{a}^*[t]}-c$
for all $q\in{\cal Q}(\boldsymbol{q}^*_x[t])$, then the BF benchmark
will give the final exam after the $t$th teaching material. Here $c>0$ is the {\em teaching cost} of showing one more material to the student, which is the cost related to the time it takes for the student to complete the teaching material. 
The average final exam score minus the teaching cost achieved by following the BF benchmark for
the first $n$ students is equal to 
\begin{align}
RW_{\textrm{BF}}(n) = \sum_{i=1}^{n}  
\frac{  \mathrm{E} [Y_{x_{i},\boldsymbol{Q}_{i,}^{*}, \boldsymbol{A}_{i}^{*}} 
- c |\boldsymbol{Q}_{i,}^{*}|]} {n},   \notag
\end{align}
where $Y_{x_{i},\boldsymbol{Q}_{i,}^{*}\boldsymbol{A}_{i}^{*}}$ is
the random variable that represents the final exam score of student
$i$, where $\boldsymbol{Q}_{i}^{*}$ is the random variable that
represents the sequence of teaching materials given to student $i$
by the BF benchmark, and $\boldsymbol{A}_{i}^{*}$ is the
random variable that represents the sequence of feedbacks provided
by student $i$ to the teaching materials $\boldsymbol{Q}_{i}^{*}$. The BF benchmark is an {\em oracle} policy because we assume
that nothing is known about the expected exam scores a priori. Any
learning algorithm $\alpha$ which selects a sequence of teaching materials
$\boldsymbol{Q}^{\alpha}_{i}$ based on the sequence of feedbacks $\boldsymbol{A}^\alpha_{i}$
has a average regret with respect to the BF benchmark which
is given by 
\begin{align}
R(n)=RW_{BF}(n)-   \sum_{i=1}^{n}  
 \frac{ \mathrm{E} [Y_{x_{i},\boldsymbol{Q}_{i,}^{\alpha}\boldsymbol{A}_{i}^{\alpha}}  - |\boldsymbol{Q}^{\alpha}_{i}| ]} {n} .    \label{eqn:regretdef}
\end{align}


\begin{figure}
\begin{center}
\includegraphics[scale=0.2]{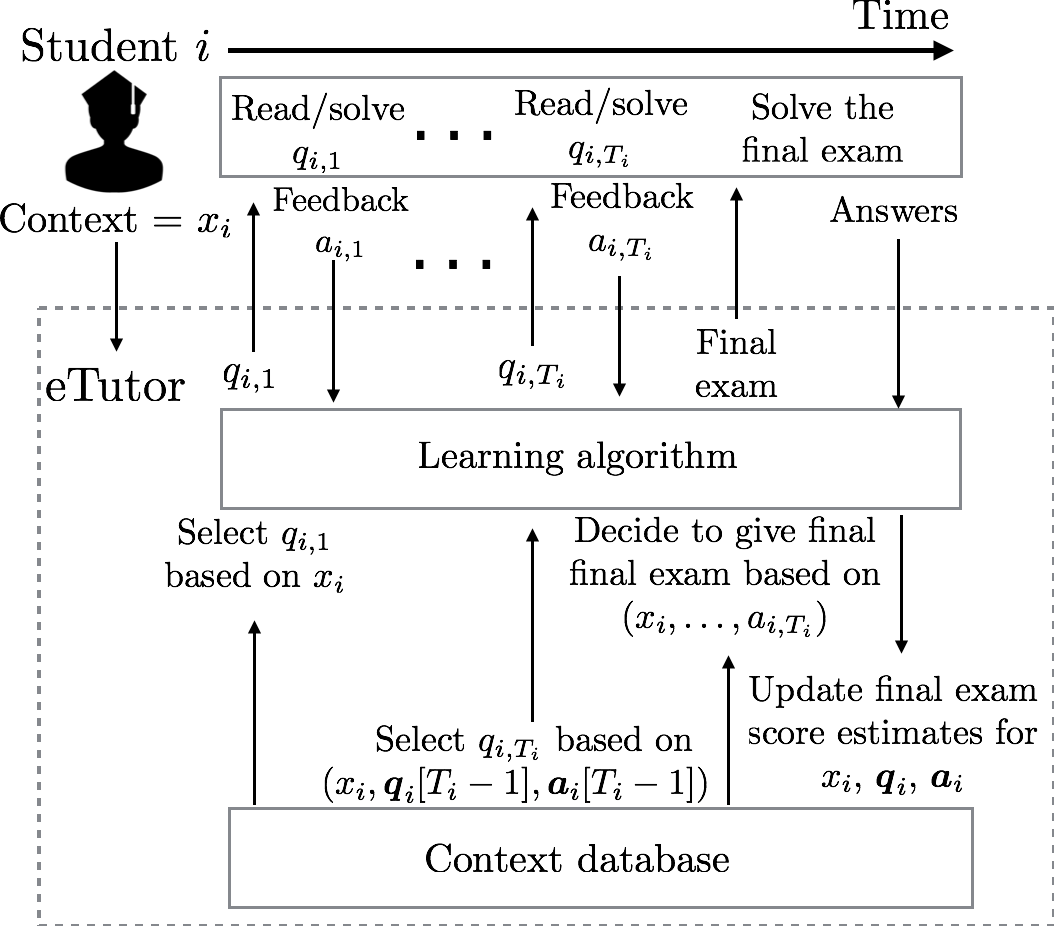}
\vspace{-0.2in}
\protect\caption{Operation of the eTutor. \label{fig:Basic2}}
\vspace{-0.4in}
\end{center}
\end{figure}

The next example illustrates that when the cost of showing new materials, i.e., $c$, is high, BF benchmark is better than the best fixed sequence. 
\begin{example}
Consider ${\cal Q} = \{ a,b \}$, ${\cal A} = \{ 0, 1\}$ and ${\cal X} = \{ x \}$. Assume that the expected rewards are given as follows: $r_{x,a,0} =0$, $r_{x,b,0} =0$, $r_{x,a,1} = 12$, $r_{x,b,1} = 6$, 
$r_{x, (a,b), (1,1)} = 13$, $r_{x, (a,b), (1,0)} =12$, $r_{x, (a,b), (0,1)} = 10$, $r_{x,(a,b),(0,0)}=9$. 
Let $\mathrm{P}_x(\boldsymbol{a} | \boldsymbol{q})$ denote the probability that feedback sequence $\boldsymbol{a}$ is given by a student with context $x$ as a response to the sequence $\boldsymbol{q}$. Assume that we have
$\mathrm{P}_x( 1 | a  ) = 0.5$, $\mathrm{P}_x( 0 | a  ) = 0.5$, $\mathrm{P}_x(  (0,0) | (a,b)  ) = 0.3$, $\mathrm{P}_x(  (0,1) | (a,b)  ) = 0.2$, $\mathrm{P}_x(  (1,1) | (a,b)  ) = 0.4$, $\mathrm{P}_x(  (1,0) | (a,b)  ) = 0.1$.

The BF benchmark shows $a$ as the first teaching material. Then if feedback is $0$ it also shows $b$ before giving the final exam. Else, it gives the final exam just after showing $a$. Hence the expected reward of the BF is 
\begin{align}
RW_{\textrm{BF}}(n) = 0.5 \times 12 + 0.3 \times (9 -c ) + 0.2 \times (10 -c ) = 10.7 - 0.5c ,  \notag
\end{align}
where $c$ is the cost of showing the second material. 
The best fixed sequence always shows $a$ first, $b$ second, and then gives the final exam. Its expected reward is equal to 
\begin{align}
0.3 \times 9 + 0.4 \times 12 + 0.2 \times 10 + 0.1 \times 11 - c = 11 - c.     \notag
\end{align}
Clearly, the BF benchmark is better than the best fixed sequence given that $c > 3/5$.
\end{example}

\subsection{\lowercase{e}Tutor} \label{sec:percel} 
In this section we propose {\em eTutor} (pseudocode given in Fig. \ref{fig:eTutor}), which learns the optimal sequence of teaching materials to show based on the student's context and feedback about the previously shown teaching materials (as shown in Fig. \ref{fig:Basic2}). In order to minimize the regret given in (\ref{eqn:regretdef}), eTutor balances exploration and exploitation when selecting the teaching materials to show to the student. Consider a student $i$ and the $t$th teaching material shown to that student. eTutor keeps the following sample mean reward estimates:
($i$) $\hat{r}_{x,t,q,a}(i)$ which is the estimated final exam score for students with context $x$ that took the course before student $i$ who are given the final exam right after material $q$ is given as the $t$th material and feedback $a$ is observed, ($ii$) $\hat{y}_{x,a,t,q}(i)$ which is the estimated final exam score for students with context $x$ that took the course before student $i$ who are given the final exam right after material $q$ is given as the $t$th material after observing feedback $a$ for the $t-1$th material. In addition to these, eTutor keeps the following counters: ($i$) $T_{x,t,q,a}(i)$ which counts the number of times material $q$ is shown as the $t$th material and feedback $a$ is obtained for students with context $x$ that took the course before student $i$, ($ii$) $T_{x,a,t,q}(i)$ which counts the number of times material $q$ is shown as the $t$th material after feedback $a$ is obtained from the previously shown material for students with context $x$ that took the course before student $i$. 

\begin{figure}[h!]
\fbox {\begin{minipage}{0.95\columnwidth}
\begin{algorithmic}[1]
\STATE{Input $D>0$, $\delta>0$.}
\STATE{Initialize: $\hat{r}_{x,t,q,a} =0, \hat{y}_{x,a,t,q}=0$, $T_{x,t,q,a}=0$, $T_{x,a,t,q}=0$, $\forall x \in {\cal X}, a \in \{ -1,0,1\}, q \in {\cal Q}, t=1,\ldots,Q$. $a_{i,0}=0$, $\boldsymbol{q}_i[0] = \emptyset$, $\forall i =1,2,\ldots$.}
\WHILE{$i \geq 1$}
\STATE{Receive student context $x = x_i$}
\STATE{${\cal U}_1 = \{ q \in {\cal Q} : T_{x,0,1,q} < D \log (i / \delta) \}$  }
\IF{${\cal U}_1 \neq \emptyset$}
\STATE{Give $q_{i,1}$ randomly selected from ${\cal U}_1$, get $a_{i,1}$.}
\STATE{Give the final exam, get the score $X(i)$, $t^* = 1$, //BREAK}
\ELSE
\STATE{Give $q_{i,1} = \argmax_{q \in {\cal Q}}  \hat{y}_{x,0,1,q}$, get $a_{i,1}$.}
\ENDIF
\STATE{$t=2$}
\WHILE{$2 \leq t \leq Q$}
\STATE{${\cal U}_t = \{ q \in {\cal Q}_{\boldsymbol{q}_i[t-1]} : T_{x,a_{i,t-1},t,q} < D \log  (i / \delta)   \}$}
\IF{$T_{x,t-1,q_{i,t-1},a_{i,t-1}} < D \log i$}
\STATE{Give the final exam, get the score $X(i)$, $t^* = t-1$, //BREAK}
\ELSIF{${\cal U}_t \neq \emptyset$}
\STATE{Show $q_{i,t}$ randomly selected from ${\cal U}_t$ and get the feedback $a_{i,t}$.}
\STATE{Give the final exam, get the score $X(i)$, $t^* = t-1$, //BREAK}
\ELSE
\IF{$\hat{r}_{x,t-1,q,a_{i,t-1}} \geq \hat{y}_{x,a_{i,t-1},t,q'} -c$, $\forall q \in {\cal Q}_{\boldsymbol{q}_i[t-1]}$}
\STATE{Give the final exam, get the score $X(i)$, $t^* = t-1$, //BREAK}
\ELSE
\STATE{Show $q_{i,t} = \argmax_{q \in {\cal Q}_{\boldsymbol{q}_i[t-1]}} \hat{y}_{x,a_{i,t-1},t,q'}$ and get the feedback $a_{i,1}$}
\ENDIF
\ENDIF
\STATE{$t=t+1$}
\ENDWHILE
\STATE{Update $\hat{r}_{x,t^*,q_{i,t^*},a_{i,t^*}}$, $\hat{y}_{x,a_{i,t^*-1},t^*,q_{i,t*}}$ using $X(i)$ (sample mean update).}
\STATE{$T_{ x,t^*,q_{i,t^*},a_{i,t^*} }++$, $T_{x,a_{i,t^*-1},t^*,q_{i,t^*}}++$.}
\STATE{$i=i+1$}
\ENDWHILE
\end{algorithmic}
\end{minipage}
}  
\vspace{-0.1in}
\caption{Pseudocode for eTutor.}
\label{fig:eTutor}
\vspace{-0.25in}
\end{figure}

Next, we explain how exploration and exploitation is performed. 
Consider the event that eTutor asks question $q_{i,t}=q$ and receives feedback $a_{i,t}=a$. It first checks if $T_{x,t,q,a}(i) < D \log  (i / \delta) $, where $D>0$ and $\delta>0$ are constants that are input parameters of eTutor. If this holds, then eTutor explores by giving the final exam and obtaining the final score $X(i)$, by which it updates $\hat{r}_{x,t,q,a}(i+1) = (\hat{r}_{x,t,q,a}(i+1) + X(i))/( T_{x,t,q,a}(i) + 1 )$. Else if $T_{x,t,q,a}(i) \geq D \log  (i / \delta) $, eTutor checks if there are any questions $q' \in {\cal Q}_{\boldsymbol{q}_i[t]}$ for which $T_{x,a_{i,t},t+1,q'}(i) < D \log  (i / \delta)$. If there are such questions, then eTutor explores one of them randomly by showing that material to the student, obtaining the feedback, giving the final exam, and obtaining the final exam score. The obtained final exam score $X(i)$ is used for both updating $\hat{r}_{x,t+1,q',a_{i,t+1}}(i+1)$ and $\hat{y}_{x,a_{i,t},t+1,q'}(i+1)$. If none of the above events happen, then eTutor exploits at $t$. To do this it first checks if 
$\hat{r}_{x,t,q,a_{i,t}}(i)   \geq \hat{y}_{x,a_{i,t},t+1,q'}(i) -c$, 
for all $q' \in {\cal Q}_{\boldsymbol{q}_i[t]}$. If this is the case, it means that showing one more teaching material does not increase the final exam score enough to compensate for the {\em teaching cost} of showing one more material. Hence, eTutor gives the final exam after its $t$th material. If the opposite happens, then it means that showing one more material can improve final exam score sufficiently enough for it to compensate the cost of teaching. Hence, eTutor will show one more teaching material to the student which is 
$q_{i,t+1} = \argmax_{q' \in {\cal Q}_{\boldsymbol{q}_i[t]}} \hat{y}_{x,a_{i,t},t+1,q'}(i)$. 
The next decision to take will be based on the student's feedback to $q_{i,t+1}$ which is $a_{i,t+1}$.
This goes on until eTutor gives the final exam, which will eventually happen since ${\cal Q}$ is finite.   
 
\subsection{Regret and Confidence Bounds For eTutor} \label{sec:regret}
 The regret of eTutor can be written as the sum of two separate regret terms: regret for the experimental materials shown to students when eTutor explores, i.e., $R_e(n)$, and regret for students that eTutor exploits, i.e., $R_s(n)$. Hence we can write $R(n) = E[R_e(n)] + E[R_s(n)]$. 

For a sequence of numbers $\{ r \}_{r \in {\cal R}}$, let $\mathrm{min2}(  \{ r \}_{r \in {\cal R}})$ be the difference between the highest and the second highest numbers. 
Consider any sequence of materials $\boldsymbol{q}^*_x[t] \in \boldsymbol{S}[t]$ and feedback $\boldsymbol{a}[t] \in \boldsymbol{{\cal A}}(\boldsymbol{q}^*_x[t])$, where $\boldsymbol{q}^*_x[t]$ is the sequence of materials shown by the BF benchmark to a student with context $x$. Let 
\begin{align}
\Delta_{\min,1} := \mathrm{min2}(  \{ y_{x, q, 0}   \}_{q \in {\cal Q}}     )     ,  \notag
\end{align}
and
\begin{align}
\Delta_{\min,t} := \mathrm{min2}(  r_{x,\boldsymbol{q}^*_x[t],\boldsymbol{a}[t]} ,    
 \{ y_{x, (\boldsymbol{q}^*_x[t], q), \boldsymbol{a}[t]}   \}_{q \in {\cal Q}_{\boldsymbol{q}^*_x[t]}}     )  ,     \notag
\end{align}
for $1< t < Q$. 
Let $\Delta_{\min} := \min_{t=1,\ldots, Q-1} \Delta_{\min,t}$.
Given that the constant $D$ that is input to eTutor is such that 
$D \geq 4/\Delta^2_{\min}$, where $\Delta_{\min} =  (\min_{ x \in {\cal X}, \boldsymbol{s} \in \boldsymbol{{\cal S}}, \boldsymbol{a} \in {\boldsymbol{\cal A}}(\boldsymbol{s})}     r_{x, \boldsymbol{s}, \boldsymbol{a}})^2$,
Assuming that the maximum final exam score is equal to $1$, we have the following bounds on the regret. 

\begin{theorem} \label{thm:regret}
Setting the parameters of eTutor as $D = 4/\Delta^2_{\min}$ and $\delta = \sqrt{\epsilon}/(Q \sqrt{2 \beta})$, where $\beta = \sum_{t=1}^\infty 1/t^2$, we have the following bounds on the regret of eTutor. 
The regret of eTutor for the first $n$ students is bounded as follows:\\
($i$) $R_{e}(n) \leq 2 |{\cal X}| |{\cal A}| Q^2 D\log ( n/ \delta)/n$ with probability 1. \\  
($ii$) $R_{s}(n)  = 0$ with probability at least $1-\epsilon$. \\
($iii$) $R(n)  \leq 2 |{\cal X}| |{\cal A}| Q^2 D  \log (n/ \delta)/n + \epsilon$.
\end{theorem}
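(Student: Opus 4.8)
The plan is to decompose the regret $R(n) = \mathrm{E}[R_e(n)] + \mathrm{E}[R_s(n)]$ and bound each piece separately, exactly matching the three-part structure of the theorem statement. I would treat part $(i)$ as a worst-case counting argument, part $(ii)$ as a concentration (Hoeffding) argument, and part $(iii)$ as the straightforward combination of the two.

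\textbf{Part $(i)$: bounding the exploration regret $R_e(n)$.}
The exploration regret accrues only on students for whom eTutor triggers one of its exploration branches, i.e.\ whenever some counter $T_{x,t,q,a}(i)$ or $T_{x,a,t,q}(i)$ is still below the threshold $D \log(i/\delta)$. The key observation is that, for each fixed context $x \in {\cal X}$, each position $t \in \{1,\ldots,Q\}$, each material $q \in {\cal Q}$, and each feedback value $a \in {\cal A}$, the threshold forces at most $D \log(n/\delta)$ explorations over the first $n$ students (using the monotonicity of $\log(i/\delta)$ in $i$). First I would argue that a single exploring student contributes at most the maximum per-student reward gap to the regret; since the maximum exam score is normalized to $1$ and costs are bounded, each exploration step costs at most a constant. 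Multiplying the number of $(x,t,q,a)$ tuples, which is $|{\cal X}| \cdot Q \cdot Q \cdot |{\cal A}|$ (the two $Q$ factors accounting for position and material), by the per-tuple exploration budget $D\log(n/\delta)$ and by the per-student cost, then dividing by $n$ (since $R_e$ is an \emph{average} regret), yields the claimed $2|{\cal X}||{\cal A}|Q^2 D \log(n/\delta)/n$. The factor $2$ absorbs the fact that eTutor maintains two families of counters ($T_{x,t,q,a}$ and $T_{x,a,t,q}$) and updates both on each exploration. This bound holds with probability $1$ because it is purely a deterministic counting bound on the number of exploration rounds, independent of the realized exam scores.

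\textbf{Part $(ii)$: bounding the exploitation regret $R_s(n)$.}
This is the crux of the argument and the step I expect to be the main obstacle. The goal is to show that with high probability eTutor's exploitation decisions exactly coincide with the BF benchmark's decisions, so that $R_s(n) = 0$. The plan is to show that, once all relevant counters exceed $D\log(i/\delta)$, each empirical estimate $\hat{y}_{x,a,t,q}(i)$ and $\hat{r}_{x,t,q,a}(i)$ is within $\Delta_{\min}/2$ of its true value $y_{x,\cdot}$ or $r_{x,\cdot}$. By Hoeffding's inequality, for an estimate averaged over at least $D\log(i/\delta)$ i.i.d.\ samples with range $[0,1]$, the probability that it deviates by more than $\Delta_{\min}/2$ is at most $2\exp(-2 \cdot (D\log(i/\delta)) \cdot (\Delta_{\min}/2)^2) = 2(i/\delta)^{-D\Delta^2_{\min}/2}$. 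Setting $D = 4/\Delta^2_{\min}$ makes the exponent $D\Delta^2_{\min}/2 = 2$, giving a per-event failure probability of $2(\delta/i)^2$. The definition of $\Delta_{\min}$ via $\mathrm{min2}(\cdot)$ (the gap between the best and second-best ex-ante scores) is precisely what guarantees that if every estimate is within $\Delta_{\min}/2$ of the truth, then the $\argmax$ in each exploitation step selects the same material as BF, and the stopping test $\hat{r} \geq \hat{y} - c$ fires at the same round as the benchmark's test $r \geq y - c$; here Assumption~\ref{ass:Markov} is needed so that the $\argmax$ along BF's path is well-defined independently of the particular feedback realization. A union bound over all contexts, positions, materials, feedbacks, and students $i = 1,2,\ldots$ then controls the total failure probability: summing $2(\delta/i)^2$ over the relevant tuples gives $2|{\cal X}||{\cal A}|Q^2 \delta^2 \sum_{i=1}^\infty 1/i^2 = 2|{\cal X}||{\cal A}|Q^2 \delta^2 \beta$. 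Substituting $\delta = \sqrt{\epsilon}/(Q\sqrt{2\beta})$ collapses this to at most $\epsilon$ (the $|{\cal X}|$ and $|{\cal A}|$ factors being absorbed or bounded appropriately in the chosen normalization), so that on the complementary event of probability at least $1-\epsilon$ every exploitation decision matches BF and hence $R_s(n) = 0$. The delicate part is verifying that the good event implies \emph{exact} agreement of the entire adaptive decision path, not just of a single comparison, which requires propagating the estimation-accuracy guarantee jointly across all positions $t$ along the realized trajectory.

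\textbf{Part $(iii)$: combining the bounds.}
The final statement follows immediately by writing $R(n) = \mathrm{E}[R_e(n)] + \mathrm{E}[R_s(n)]$, bounding the first term by $(i)$, and bounding the second term by conditioning on the good event from $(ii)$: on that event $R_s(n) = 0$, while on its complement (probability at most $\epsilon$) the regret is at most the maximum possible per-student gap, which under the normalization is bounded by $1$. Thus $\mathrm{E}[R_s(n)] \leq \epsilon$, and adding the two contributions gives $R(n) \leq 2|{\cal X}||{\cal A}|Q^2 D \log(n/\delta)/n + \epsilon$, as claimed.
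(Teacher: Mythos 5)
Your parts (i) and (iii) track the paper's argument essentially step for step: the exploration regret is a deterministic count of forced explorations (at most $D\log(n/\delta)$ per counter, the factor $2$ from the two counter families, $Q^2$ from position times material), and the final bound follows by taking expectations and paying at most $1$ on the failure event. The genuine gap is in part (ii), and it sits exactly at the place you flag as ``delicate'' but then leave unresolved. Your Hoeffding step treats each $\hat{y}_{x,a,t,q}(i)$ and $\hat{r}_{x,t,q,a}(i)$ as a sample mean of i.i.d.\ draws whose expectation is the BF-path quantity $y_{x,(\boldsymbol{q}^*_x[t-1],q),\boldsymbol{a}[t-1]}$ (resp.\ $r_{x,\boldsymbol{q}^*_x[t-1],\boldsymbol{a}[t-1]}$). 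That is not automatic: the samples feeding these estimators come from students whose first $t-1$ materials were chosen by eTutor itself (exploration at slot $t$ only occurs after exploiting through slot $t-1$), so those samples are drawn from the BF-prefix distribution only on the event that the earlier exploitation decisions already agreed with BF. A flat union bound over tuples and students cannot break this circularity. The paper breaks it with a chain-rule decomposition: it defines $\textrm{Perf}_t(n)$ level by level and bounds $\mathrm{P}(\textrm{Perf}(n)) = \prod_{t=1}^{Q} \mathrm{P}(\textrm{Perf}_t(n) \mid \textrm{Perf}_{t-1}(n),\ldots,\textrm{Perf}_1(n)) \geq (1-2Q\beta\delta^2)^Q \geq 1-2Q^2\beta\delta^2$; conditioning on the earlier $\textrm{Perf}$ events is precisely what makes the level-$t$ sample means unbiased for the BF-path quantities, so that Hoeffding is applied to the right target. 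Your proposal needs this sequential conditioning (or an equivalent induction over levels), not just the observation that accuracy of all estimates implies agreement of the decision path.

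A second, smaller discrepancy: your union bound over ``contexts, positions, materials, feedbacks, and students'' over-counts. Each student contributes deviation events only for her own realized context and feedback, so the sum over students already absorbs $|{\cal X}|$ and $|{\cal A}|$; the paper's count is at most $Q$ two-sided Hoeffding events per level per student, giving $2Q\beta\delta^2$ per level and $2Q^2\beta\delta^2$ overall, which with $\delta = \sqrt{\epsilon}/(Q\sqrt{2\beta})$ equals exactly $\epsilon$. Your count instead yields $|{\cal X}||{\cal A}|\epsilon$, and ``absorbed or bounded appropriately in the chosen normalization'' is not a justification --- the stated $\delta$ contains no $|{\cal X}|$ or $|{\cal A}|$ factors. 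Both issues are repairable, but as written your part (ii) does not establish the claimed $1-\epsilon$ guarantee.
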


\begin{proof}
Our proof involves showing that when the eTutor estimates the final exam scores for the sequences of materials it gives to the students such that they are within $\Delta_{\min}/2$ of the true final exam score, then it will always show the same set of teaching materials as the BF benchmark does. 

To proceed, we define the following sets of students.
Let $E_1(n)$ be the set of students in $\{1,\ldots,n\}$ for which eTutor explores the teaching material in the first slot, i.e., $i \in \{1,\ldots,n\}$ for which $T_{x_i, 0,1,q}(i) < D \log(i/\delta)$ for some $q \in {\cal Q}$ such that after the material is shown, the final exam is given. 
Let $E_t(n)$, $1<t \leq Q$ be the set of students in $\{1,\ldots,n\}$ for which eTutor explores in the $t$th slot, i.e., the set of time slots for which eTutor exploited up to the $t-1$th slot and $\bar{r}_{x_i, t-1, q_{i,t-1}, a_{i,t-1}}(i) \leq D \log(i/\delta)$ or 
$\hat{y}_{x_i, a_{i,t-1}, t, q } \leq D \log(i/\delta)$ for some $q \in Q_{\boldsymbol{q}_i[t-1]}$ such that the final exam is given either after the $t-1$th slot or the $t$th slot depending on which teaching material is under-explored.
Let $\tau_1(n)$ be the set of students in $\{1,\ldots,n\}$ for which eTutor exploits for the first slot, i.e., $T_{x_i,0,1,q} \geq D \log(i/\delta)$ for all $q \in {\cal Q}$. 
Let $\tau_t(n)$ be the set of students in $\{1,\ldots,n\}$ for which eTutor exploits for the $t$th slot, i.e., $\bar{r}_{x_i, t-1, q_{i,t-1}, a_{i,t-1}}(i) \geq D \log(i/\delta)$ and 
$\hat{y}_{x_i, a_{i,t-1}, t,q} \geq D \log(i/\delta)$ for all $q \in {\cal Q}_{\boldsymbol{q}_i[t-1]}$ such that eTutor has not given the final exam before slot $t-1$. 
Let 
$Z_t(n) := \tau_t(n) - \tau_{t+1}(n) - E_{t+1}(n)$, $1 \leq t < Q$ 
denote the set of students in $\{1,\ldots,n\}$  for which eTutor stops and gives the final exam after the $t-1$th teaching material is shown at times when it exploits. Let $Z_Q(n) := \tau_Q(n)$. 
The set of all students for which eTutor explores until the $n$th student is equal to 
$E(n) := \bigcup_{t=1}^Q E_1(n)$, where $E_t(n) \cap E_{t'}(n) = \emptyset$ for $t \neq t$. 
The set of all students for which eTutor exploits until the $n$th student is equal to 
$Z(n) := \bigcup_{t=1}^Q Z_t(n)$, where $Z_t(n) \cap Z_{t'}(n) = \emptyset$. 
We also have 
$Z(n) := \tau_1(n)$, $\tau_{t}(n) = \tau_{t+1}(n) \cup E_{t+1}(n) \cup Z_t(n)$ for $1 \leq t < Q$. 

In the following we will bound $R_{s}(n)$.
Next, we define the events which correspond to the case that the estimated final exam scores for the sequences followed by the BF benchmark are within $\Delta_{\min}/2$ of the expected final exam scores. 
Let
\begin{align}
\textrm{Perf}_1(n) := \{  |\hat{y}_{x_i, 0, 1, q} - y_{x_i, q, 0 }    | < \Delta_{\min}/2, \forall q \in {\cal Q},  \forall i \in \tau_1(n)  \}    ,  \notag
\end{align}
 and 
 \begin{align}
 \textrm{Perf}_t(n) &:= \left\{   |\hat{r}_{x_i, t-1, q_{i,t-1}, a_{i,t-1}} - r_{x_i, \boldsymbol{q}^*_{x_i}[t-1], \boldsymbol{a}_{\boldsymbol{q}^*_{x_i}[t-1]}[t-1]}    | < \Delta_{\min}/2, \right. \notag \\
&\left. |\hat{y}_{x_i, a_{i,t-1}, t, q} - y_{x_i, (\boldsymbol{q}^*_{x_i}[t-1], q) , \boldsymbol{a}_{\boldsymbol{q}^*_{x_i}[t-1]}[t-1] }    | < \Delta_{\min}/2, \forall q \in {\cal Q}_{\boldsymbol{q}^*_{x_i}[t-1]},  \forall i \in \tau_t(n)  \right\}       \notag
 \end{align}
Let
\begin{align}
\textrm{Perf}(n)  = \bigcap_{t=1}^Q  \textrm{Perf}_t(n).  \notag
\end{align}
On event $\textrm{Perf}(n)$, eTutor selects teaching materials for the students in the same way as BF benchmark does. Hence the contribution to the regret given in (\ref{eqn:regretdef}) on event $\textrm{Perf}(n)$ is zero. 

Next, we lower bound the probability of event $\textrm{Perf}(n)$.
Using the chain rule we can write 
\begin{align}
\mathrm{P} (\textrm{Perf}(n) )  &=  \mathrm{P} (\textrm{Perf}_Q(n), \textrm{Perf}_{Q-1}(n), \ldots, \textrm{Perf}_1(n) )  \notag \\
& = \mathrm{P} (\textrm{Perf}_Q(n) |  \textrm{Perf}_{Q-1}(n), \ldots, \textrm{Perf}_1(n) )  \times
\mathrm{P} (\textrm{Perf}_{Q-1}(n) | \textrm{Perf}_{Q-2}(n), \ldots, \textrm{Perf}_1(n) )  \notag \\
& \times \ldots \times \mathrm{P}  (\textrm{Perf}_{2}(n) |  \textrm{Perf}_{1}(n)   ) 
\times \mathrm{P}( \textrm{Perf}_{1}(n)  ).
\end{align}

For an event $E$, let $E^c$ denote its complement. Note that we have 
\begin{align}
 \mathrm{P} ( \textrm{Perf}_{1}(n)^c  )  
 & \leq \sum_{i \in \tau_1(n)} \sum_{q \in {\cal Q}} \mathrm{P} (|\hat{y}_{x_i, 0, 1, q} - y_{x_i, q, 0 }    | < \Delta_{\min}/2)     \notag \\
 &\leq \sum_{i \in \tau_1(n)}  2 Q \exp(-2D \log(i/\delta) \Delta^2_{\min}/4 ) \\
 &\leq \sum_{i \in \tau_1(n)}  2Q \delta^2/i^2 \leq 2  Q \beta \delta^2 , \label{eqn:chain}
\end{align}
since $D \geq 4/\Delta^2_{\min}$ and $\beta = \sum_{i=1}^\infty 1/i^2$. Hence, we have
\begin{align}
 \mathrm{P} ( \textrm{Perf}_{1}(n)  )   \geq   1 - 2  Q \beta \delta^2.   \notag
\end{align}
On event $\textrm{Perf}_{1}(n)$, it is always the case that the first teaching material that is shown is chosen according to the BF benchmark, independent of whether the eTutor explores or exploits the second slot. Hence given $\textrm{Perf}_{1}(n)$, the sample mean estimates that are related to $\textrm{Perf}_{2}(n)$ are always sampled from the distribution in which the first teaching material is shown according to the BF benchmark. Because of this, we have
\begin{align}
 \mathrm{P}  (\textrm{Perf}_{2}(n) |  \textrm{Perf}_{1}(n)   ) \geq   1 - 2  Q \beta \delta^2 .   \notag
\end{align}
Similarly, it can be shown that
\begin{align}
 \mathrm{P}  (\textrm{Perf}_{t}(n) |  \textrm{Perf}_{t-1}(n) , \ldots,   \textrm{Perf}_{1}(n)  )   \geq  1 - 2  Q \beta \delta^2.   \notag
\end{align}
Combining all of this and using (\ref{eqn:chain}) we get 
\begin{align}
\mathrm{P} (\textrm{Perf}(n) )  & \geq (1-  2  Q \beta \delta^2)^Q \notag \\
& \geq 1 - 2 Q^2 \beta \delta^2.   \notag \\
& = 1 - \epsilon ,
\end{align}
since $\delta = \sqrt{\epsilon}/(Q \sqrt{2 \beta})$. 

Next we bound $R_{e}(n)$. From the definition of $E_t(n)$, $t = 1, \ldots, Q$, we know that $|E_1(n)| \leq |{\cal X}| |{\cal A}| Q D \log (n/ \delta)$. Similarly, for $E_t(n)$, $t=2,\ldots, Q$, we have $|E_t(n)| \leq 2 |{\cal X}| |{\cal A}| Q D \log (n/ \delta)$. Hence, we have $|E(n)| \leq 2 |{\cal X}| |{\cal A}| Q^2 D \log (n/ \delta)$. Since the worst-case reward loss due to showing a suboptimal set of teaching materials to a student is at most $1$, we have
\begin{align}
R_{e}(n) \leq   2 |{\cal X}| |{\cal A}| Q^2 D \log (n/ \delta)/n .    \notag
\end{align}

Finally, the regret bound on $R(n)$ holds by taking the expectation.

\end{proof}

Theorem 1 implies that the average final exam score of students tutored by eTutor converges to the average final exam score of students tutored by BL (with probability at least $1-\epsilon$) which knows the expected final exam scores, and hence, how students learn for each sequence of teaching materials perfectly. Moreover, the regret gives the convergence rate, and since it decreases with $\log n/n$, eTutor converges very fast.

\section{Illustrative Results} \label{sec:simulations} 

We deployed our eTutor system for students who have already studied digital signal processing (DSP) one or more years ago, and the goal of this implementation of the eTutor is to have them refresh the material about $\emph{discrete Fourier transform}$ (DFT) in the minimum amount of time. 
Student contexts belong to ${\cal X} = \{0,1\}$, where for a student $i$, $x_i=0$ implies that she is not confident about her knowledge of DFT, and $x_i =1$ implies that she is confident about her knowledge of DFT.
${\cal Q}$ contains three ({\em remedial}) materials: one text that describes DFT and two questions that refreshes DFT knowledge. If a question is shown to the student and if the student's answer is incorrect, then the correct answer is shown along with an explanation. For each $q \in {\cal Q}$, we set the cost to be $c_q = 0.04 \times \theta_q$, where $\theta_q$ (in minutes) is the average time it takes for a student to complete material $q$.  The value of $\theta_q$ is estimated and updated based on the responses of the students.
Performance of the students after taking the remedial materials are tested by the same final exam. 

We compare the performance of eTutor with a {\em random rule} (RR) that randomly selects the materials to show and a {\em fixed rule} (FR) that shows all materials (text first, easy question second, hard question third). The average final score achieved by these algorithms for $n=100$ and  $n=500$ students are shown in Table \ref{table:numresults}. From this table we see that eTutor achieves 15,7\% and 1.1\% improvement in the average final score for $n=500$ compared to RR and FR, respectively. The improvement compared to FR is small because FR shows all the materials to every student. It is observed that the average final score of eTutor increases with $n$, which is expected since eTutor learns the best set of materials to show for each context as more students take the course.  In contrast, RR and FR are non-adaptive, hence their average final exam scores do not improve as more students take the course. 
For $n=500$, the average time spent by each student taking the course is $8.5$ minutes for eTutor which is 16.7\% and 50\% less than the average time it takes for the same set of students by RR and FR, respectively. eTutor achieves significant savings in time by showing the best materials to each student based on her context instead of showing everything to every student. 

\begin{table}
\centering
{\fontsize{10}{10}\selectfont
\setlength{\tabcolsep}{.1em}
\begin{tabular}{|l|c|c|}
\hline
 \# of students  &  $n=100$ & $n=500$\\
\hline
eTutor & (66.4, 8.7) & (75.8, 8.5)   \\
\hline
RR &  (62.4, 10.2) & (62.5, 10.2) \\
\hline
FR & (75.5, 17.0)  & (75.0, 17.0) \\
\hline
\end{tabular}
}
\caption{Comparison of eTutor with RR and FR: For each entry $(x, y)$, $x$ denotes the average final score (maximum = $100$) and $y$ denotes the time spent in minutes taking the course.}
\label{table:numresults}
\vspace{-0.2in}
\end{table}


 
\section{Conclusion}\label{sec:conc}

In this paper, we proposed a novel online education system called eTutor. 
While in this paper, eTutor was used to learn the best sequence of materials to show a specific student, eTutor can also be easily adapted to learn the best teaching methodology such as
what types of materials/examples to show (visual or not), what style of teaching to use etc.


\bibliographystyle{IEEEtran}
\bibliography{refs}

\end{document}